  \providecommand\BibTeX{{%
    \normalfont B\kern-0.5em{\scshape i\kern-0.25em b}\kern-0.8em\TeX}}}
\newcommand{\ie}{\textit{i.e.,~}}
\newcommand{\eg}{\textit{e.g.,~}}
\renewcommand\footnotetextcopyrightpermission[1]{}
\begin{document}

\title[]{Friends with Costs and Benefits: Community Formation with Myopic, Boundedly-Rational Actors}

\author{Naina Balepur}
\authornote{Authors contributed equally to this research.}
\email{nainab2@illinois.edu}
\orcid{0009-0005-1155-3705}
\affiliation{%
  \institution{University of Illinois Urbana-Champaign}
  \city{Urbana}
  \state{IL}
  \country{USA}
}

\author{Andy Lee}
\email{andy2@illinois.edu}
\orcid{0000-0002-9014-2496}
\authornotemark[1]
\affiliation{%
  \institution{University of Illinois Urbana-Champaign}
  \city{Urbana}
  \state{IL}
  \country{USA}
}

\author{Hari Sundaram}
\email{hs1@illinois.edu}
\orcid{0000-0003-3315-6055}
\affiliation{%
  \institution{University of Illinois Urbana-Champaign}
  \city{Urbana}
  \state{IL}
  \country{USA}
}

\begin{abstract}
In this paper we address how complex social communities emerge from local decisions by individuals with limited attention and knowledge. This problem is critical; if we understand community formation mechanisms, it may be possible to intervene to improve social welfare. We propose an interpretable, novel model for attributed community formation driven by resource-bounded individuals' strategic, selfish behavior. 
In our stylized model, attributed individuals act strategically in two dimensions: attribute and network structure.
Agents are endowed with limited attention, and communication costs limit the number of active connections. In each time step, each agent proposes a new friendship.
Agents then accept proposals, decline proposals, or remove friends, consistent with their strategy to maximize payoff. 
We identify criteria (number of stable triads) for convergence to some community structure and prove that our community formation model converges to a stable network. Ablations justify the ecological validity of our model and show that each aspect of the model is essential. Our empirical results on a physical world microfinance community demonstrate excellent model fits compared to baseline models. 

\end{abstract}

\maketitle

\pagestyle{plain}

\section{Introduction}

How do complex social communities emerge from behaviors of attributed, selfish, resource-bounded individuals (\ie with limited attention and who lack global network knowledge)? 
Developing an ecologically valid community formation mechanism is a challenging problem. If we can do so, we can improve social welfare by intervening via network-specific policy recommendations. For example, improving social welfare is salient in communities where misinformation spreads~\cite{Bak-Coleman2021}. While we understand mechanisms by which global network structure~\cite{Barabasi1999} and decentralized search~\cite{Kleinberg2000a} emerge from individual actions, we lack similar community formation mechanisms. Community \textit{detection} has been extensively studied~\cite{Fortunato2010,Leskovec2009,Lin2011a,Newman2004,Palla2005,Yang2013a}, but there has been limited work explaining how communities form through individual action. 

While non-strategic models~\cite{Erdoes1960,Albert2002,Watts1998,Holland1983} may generate networks with similar structural properties to social networks, they are not explanatory models. Individual choice plays a crucial role in community formation. Individuals have preferences over the structure of these communities; some may prefer dense groups with high trust, others may prefer to connect to a variety of people, maximizing social capital~\cite{Campbell2015,Granovetter1973,Burt1999,Burt2004}. People may also prefer to connect to those who are similar or different from themselves.

We propose a model for \textit{attributed} community formation, where communities emerge through individuals' strategic, selfish behavior. In our stylized model, attributed individuals strategize in two dimensions: who they want to be friends with (homophily vs. heterophily) and what local network structures they desire (highly embedded vs. social capital). We know from Schelling's seminal work~\cite{Schelling1971,Schelling1978} that an individual's preference for homophilous interaction with neighbors can lead to highly segregated communities. Conversely, friendship diversity can provide structural advantages~\cite{Granovetter1973}.
We also know that some individuals prefer highly embedded communities that foster trust~\cite{Granovetter1985}, yet others seek social capital for the structural advantages~\cite{Burt1999}. 
Inspired by the work of Dunbar ~\cite{Dunbar2009}, individuals in our model can support only a limited number of active communications. By triadic closure~\cite{Kossinets2009}, individuals largely rely on their friends to introduce them to potential connections with some connections resulting from random encounters.

In this work, attributed community formation proceeds as follows. First, each agent examines their neighbors' connections as candidates for friendship. One additional node is revealed at random as a candidate, which mimics friendships arising from serendipitous encounters. After every agent proposes to the utility-maximizing candidate, all agents examine incoming proposals for friendship. They then \textit{myopically} choose to do one of three things: 1) accept a proposal, 2) remove an existing friendship, 3) do nothing --- whichever maximizes their payoff. Since all agents have a constant endowment of resources, and communication with friends is costly, dropping a low utility connection may prove strategic. 
In this work, we consider a population-level mixing of strategies; thus, we can view this model as an evolutionary game.  
We then identify criteria for convergence to a community structure. 
Our empirical results on a large microfinance dataset~\cite{banerjee2013village} demonstrate excellent fits to the model. Our contributions are as follows:

\begin{description}[leftmargin=0.25cm]
    \item[Evolutionarily Stable Communities:] We propose a model for community formation that is evolutionarily stable. We define a notion of stability over triads; in contrast prior work considers pairwise stability~\cite{mele2022structural}. We show that for any finite network the number of stable triads is non-decreasing, and so our model must converge to a non-unique equilibrium state.
    \item[Identification of Essential Agent Properties:] In this work, we identify and test necessary behavioral properties of agents. Our agents are resource-constrained, restricted to local network knowledge only, and have preferences over both network structure and neighbor attributes. Prior work does not consider these properties in conjunction. We demonstrate through ablation tests that these three ecologically valid behavioral assumptions are necessary to explain the formation of complex community structures. We elaborate on these model properties in Section \ref{properties}.
    \item[Interpretability:] Our model parameters characterize the nature of the agents in the network in a way that is easy to understand. Prior work on community formation~\cite{christakis2020empirical,mele2022structural} uses decision functions whose parameters are difficult to interpret and rely upon vast amounts of data. We identify mixing proportions over strategies which help us explain and predict the network outcome.
\end{description}

\section{Related Work}

There are many social network formation models that focus on producing graphs with some subset of observed properties of real networks.
Within this broader class of network formation models are models which focus on community formation.

\textit{Non-strategic Network Formation:}
Classic non-strategic network formation models include the Erdős–Rényi \cite{Erdoes1960}, Barabási–Albert \cite{Albert2002}, Watts-Strogatz \cite{Watts1998}, and the Stochastic Block Model \cite{Holland1983}.
The model in \cite{sallaberry2013model} merges cliques to form a network with strong community structure.
While these models may generate networks which have similar structural properties to social networks, they are not plausibly explanatory models due to the lack of individual choice.

\textit{Strategic Network Formation:}
Network formation games \cite{Fabrikant2003, Anshelevich2006, Anshelevich2008, Jackson2005} have been extensively studied and efficiency bounds are often known.
Utility functions in such games are usually interpretable and the behavior of agents understandable.
Such network games include local formation games \cite{Fabrikant2003}, geodisic utility games \cite{Jackson2005}, and information passing games \cite{Jackson2005}.
In the directed setting there is work in discrete choice models of social network formation \cite{Overgoor2019}.
However, agents in these models often have complete network information and no resource bounds. Agents also rarely consider both structure and attributes in their edge formation decisions.

\textit{Best Response Strategies:} Agents taking best response strategies make optimal selfish decisions while making assumptions about others' strategies. Much of the work on best response for network formation \cite{friedrich2017efficient, derks2008local, goeree2009search} builds off the work of Bala and Goyal \cite{bala2000noncooperative}. They present network formation as a non-cooperative game where agents' actions lead to equilibrium social networks with simple structures. We also present a non-cooperative game for network formation, but our agents do not make assumptions about others' strategies, and only observe the current state of the network.

\textit{Bounded Rationality:}
Bounded rationality, first detailed by Simon \cite{Simon1955}, has been studied in a variety of contexts.
It is known that in general agents violate rationality axioms \cite{Nielsen2022} and do not maximize expected utility \cite{Selten1990}.
Based on this prior work we consider agents who do not perfectly maximize expected utility.
Our baseline models (detailed in \cite{christakis2020empirical} and \cite{mele2022structural}) also attempt to incorporate bounded rationality. In both, agents consider the formation of only one edge in each iteration; so while knowledge is indeed limited, the choice of this edge is not strategic.

\textit{Triangle Closing:}
Triadic closure is a commonly observed phenomena in social networks \cite{Granovetter1973}, and prior network formation models have captured this property.
The clique merging model of \cite{sallaberry2013model} produces networks with large numbers of triangles.
Random walk models such as the process described in \cite{Kumpula2009} also generate networks with high triangle count, however this model is not attributed.

\textit{Attribute Assortativity:}
Many network formation models capture homophilic preferences, and this area has been surveyed \cite{Rivera2010}.
Some \cite{Graham2017, Boucher2015} follow empirical methods for estimating the homophilic preferences of agents in network formation.
Others \cite{Currarini2016} follow an approach more similar to our own, focusing on thresholds of assortativity and the impact on the final network. However, we seek a process that also involves preferences over structure. In real world networks, some people may prefer tightly knit groups with high trust, whereas others may prefer to connect to a variety of people, maximizing social capital~\cite{Campbell2015,Granovetter1973,Burt1999,Burt2004}.

\section{Problem Statement}

We model community formation as an evolutionary game amongst strategic, boundedly-rational agents. Consider a world with agent set $V$; each agent $v$ is endowed with immutable attributes $\tau_v$, and a bounded resource. In this game, each agent seeks to form friendships with other agents, which incur a communication cost. Thus, agents are constrained to only maintain at most $\kappa$ connections. At any time $t$, there exists an undirected network $G_t = (V,E_t)$. Due to the resource constraint, each agent must strategically choose friendships to maximize their payoff. 

Agents have latent preferences in $d$ dimensions.
We model this by a $d$ dimensional strategy space $S$. As in ecological games, we assume that for each of the $d$ dimensions, agents pick a pure strategy such as homophily or heterophily. We assume a population-level mixing of strategies $\pi$ over the $d$ dimensions. 
Thus, each agent $v$ has a strategy $s_v \in S$ with each dimension sampled according to $\pi$.

Each agent $v$ chooses actions to maximize their payoff. The payoff of each action is determined by $v$'s strategy $s_v$ and the state of the network $G_t$.  After each agent simultaneously modifies their network by taking an action during time $t$, the resultant network is $G_{t+1}$.
We ask: will a local edge-formation process with strategic, resource-bounded agents converge to a network with stable community properties %$\mathcal{C}$ 
at some time $t^*$ ?

\section{Essential Agent Properties} \label{properties}

We determine that the following agent properties are essential to future work in ecologically-valid, strategic community formation:

\begin{description}[style=unboxed,leftmargin= .25cm]
    \item[Resource-Constrained Agents:] Maintaining active connections is costly, and
    humans are cognitively constrained \cite{Dunbar2009}.
    We introduce a limit on agents' social interactions by way of a degree constraint $\kappa$ --- agents can only maintain $\kappa$-many edges at a time. 
    We implement this by considering each agent to have a social interaction budget of 1, where each edge has cost $c = 1/\kappa$. Thus, we require that the following inequality must be true for all agents $v \in V$: $c*\delta(v)\leq1$, where $\delta(v)$ is the degree of agent $v$. Agents assign no inherent utility to any ``leftover'' budget.
    
    \item[Local Network Knowledge:] In social networks, agents do not know global network properties (\eg the total number of agents, the distance between themselves and distant agents), so we ensure that they only have knowledge of local structure.
    An agent may propose to form an edge with another agent if there is a path of length two between them, or if they were introduced by chance.
    We adopt this restriction because it represents triadic closure \cite{Kossinets2009}, an agent's connection introducing them to their neighbors.

    \item[Core Strategy Space:] People consider many factors when choosing friends.  However, prior work suggests that two factors: homophily (befriending similar others)~\cite{Centola2011,Centola2013a,McPherson2001,Kossinets2009} and local structure (social capital, embeddedness)~\cite{Burt1999,Burt2004,Granovetter1985} are \textit{both} essential strategic considerations. Thus, when forming friendships, our agents consider both the attribute of the friend, and the local network structure induced by the friendship. 

\end{description}

We show through ablations that the properties detailed above are essential to our model. The expressive and diverse communities our model produces are not possible without each of these elements.

\section{Community Formation Model} \label{the_model}

In this section we give a detailed model description.
We begin with the strategy space and the strategies' utility functions.
We give a high level description of the network formation process as well as a pseudo-code description.
Finally we give the stopping criteria for our model and illustrate its validity.
We introduce considerable notation in this section which is summarized in Table ~\ref{tab:notation}.

\begin{table}[H]
    \caption{Summary of Notation}
    \label{tab:notation}
    \begin{tabularx}{\linewidth}{@{}r X @{}}
        % \toprule
    \textsc{Notation} & \textsc{Description}\\ 
    \midrule
        $S$ & $d$ dimensional strategy space \\
        $\pi$ & Population-level mixing distribution over $S$ \\
     $s_v \sim \pi$ & Agent $v$'s private strategy drawn using $\pi$\\
     $\tau_v$ & Immutable attribute vector of agent $v$ \\
     $\delta(v)$ & Degree of agent $v$ \\
     $\kappa$ & Degree constraint for all agents \\
     $N(v)$ & Set of nodes in the neighborhood of $v$ \\
     $N^\tau(v)$ & Set of nodes in $N(v)$ with the same type as $v$ \\
     $U_v$ & Utility agent $v$ derives from $N(v)$ \\
     $\Delta_{v}$ & Number of triangles $v$ is part of \\
     $I_v$ & Number independent nodes in $N(v)$ \\
     $d_{G}(u,v)$ & Distance between nodes $u$ and $v$ in graph $G$ \\
     $P^R_v$ & Set of nodes who have proposed to $v$ \\
    %  \bottomrule
\end{tabularx}
\end{table}

We consider population-level strategy mixing. 
We assume all agents are present at $t=0$ (no addition or deletion of nodes).
For $|V|$-many agents, the network is parameterized by $\kappa, \pi, \Omega$. We define $\kappa$ as the maximum degree of any agent. 

In this paper we consider two dimensions: attributes and local structure; and exactly two pure strategies in each dimension: homophily (befriend similar), heterophily (befriend dissimilar) for attributes and social capital (maximize the breadth of friends), embeddedness (maximize the number of mutual friends) for local structure. Since our strategy space has two dimensions ($d=2$), the population mixing distribution has two parameters,  $\pi = (\alpha, \beta)$. 

We define $\alpha$ as the proportion of agents adopting homophily, $1-\alpha$ as heterophily, $\beta$ as the proportion of agents adopting social capital, $1-\beta$ as embeddedness, and $\Omega$ as the distribution over immutable attributes.
These parameters $\alpha, \beta, \Omega$ are independent.
Each agent $v \in V$ is assigned a strategy $s_v$ and type $\tau_v$ at $t=0$ according to parameters $\alpha$, $\beta$, and $\Omega$.
In subsequent iterations, each agent $v$ uses this strategy $s_v$ as well as the current state of the network $G_t$ to make changes to their neighborhood.
The type of each agent $\tau_v$ is public, but the strategy $s_v$ is private.
For tractability we assume that there are only two categorical values of $\tau$.

\renewcommand{\arraystretch}{1.4}

\begin{table}[]
    \caption{Utility functions $U_v^a$ and $U_v^s$ for each strategy
    }
    \label{tab:util_funcs}
\begin{tabularx}{\linewidth}{@{}c c X @{}}
    % \toprule
\textsc{Strategy} & \textsc{Utility Function} & \textsc{Description} \\ \midrule
         $H_m$ & $|N^\tau(v)|/\kappa$ & Neighbors with attribute $\tau_v$ \\
         $H_r$ & $|N(v) \setminus N^\tau(v)|/\kappa$ & Neighbors without attribute $\tau_v$\\
         $L_e$ & $\Delta_v / {\kappa \choose 2}$ & Triangles where $v$ is a vertex \\
         $L_c$ & $I_v / \kappa$ & Independent neighbors of $v$ \\
        %  \bottomrule
\end{tabularx}
\end{table}

\renewcommand{\arraystretch}{1}

\subsection{Strategy Space} \label{strat_space}

We denote the set of attribute strategies $H$ and structural strategies $L$.
Each agent has strategy $s_v = (s^a_v, s^s_v)$ drawn from $H \times L$.
The homophilic strategy is denoted $H_m$ and the heterophilic strategy $H_r$.
Homophilic agent $v$ derives attribute utility from neighbors of the same type as $v$. Heterophilic agent $v$ derives attribute utility from neighbors of a different type.
Let $N^\tau(v) = \{ u | u \in N(v), \tau_u = \tau_v \}$, the set of $v$'s neighbors who have the same type as $v$.
Then $N(v) \setminus N^\tau(v)$ gives the set of neighbors of $v$ of a different type.

We denote the strategy of embedded agents $L_e$ and that of social capital agents $L_c$. 
Agent $v$ desiring embeddedness derives structural utility from a connection to $u$ if the edge between $v$ and $u$ is part of one or more triangles in the network.
An agent $v$ desiring social capital gets structural utility from neighbors who are disconnected.
Let $\Delta_v = |\{ (u,w) | u, w \in N(v), (u, w) \in E \}|$, the number of triangles where $v$ is a vertex.
Additionally let $I_v = |\{ u | u \in N(v), \not\exists w \text{ s.t. } w \in N(v), (u,w) \in E\}|$, the number of components of size 1 in the subgraph induced by $v$'s neighborhood $N(v)$ on $G$. 

We give detailed utility functions in Table \ref{tab:util_funcs}. 
Agent $v$'s strategy $s_v$ is drawn from $\{H_m, H_r\} \times \{L_e, L_c\}$.
The aggregate utility of $v$ is the sum of the attribute and structural utility, $U_v = U_v^a + U_v^s$.
We normalize attribute utilities and social capital utility by $\kappa$, the degree constraint for all nodes, and the embedded utility by $\kappa \choose 2$ to ensure they all remain in $[0, 1]$.
Thus, $U_v \in [0, 2]$, for all agents $v$. 

\subsection{Model Description} \label{desc}

Consider the network at iteration $t$, $G_t = (V, E_t)$, with agents $V$ and edges $E_t$.
At time $G_0$ we have a trivial network, $E_0 = \emptyset$.
We drop the $t$ subscript where the meaning is clear.
Each iteration consists of two stages: first the proposal stage, and then the action stage.
Agents within each stage move simultaneously.
We give the algorithm details below, followed by a simplified algorithm.
We visually depict the algorithm in Figure \ref{fig:tikz}.

\begin{figure*}[t]
      \centering
      \begin{tikzpicture}
        \Vertex[x=0, y=0, label=$v$,RGB,color={100,190,244}]{v}
        \Vertex[x=0, y=0, RGB, opacity = 0, size = 2.75, style = dashed]{nbhd1}
        \Vertex[x=0, y=0, RGB, opacity = 0, size = 1.55, style = dashed, label = $N(v)$, position= below]{nbhd2}

        \Text[x=0,y=-2.5]{Figure \ref{fig:tikz}a: Proposal stage at time $t$}

        \Vertex[x=.75, y=.75,  RGB,color={142,142,142}]{n1}
        \Vertex[x=-.75, y=.75,  RGB,color={142,142,142}]{n4}
        \Vertex[x=.75, y=-.75,  RGB,color={142,142,142}]{n3}
        \Vertex[x=-.75, y=-.75,  RGB,color={142,142,142}]{n2}
        \Vertex[x=2, y=1, RGB,color={142,142,142}]{n11}
        \Vertex[x=1, y=1.75, RGB,color={142,142,142}]{n12}
        \Vertex[x=1, y=-1.75,  RGB,color={142,142,142}]{n31}
        \Vertex[x=2, y=-1,  RGB,color={142,142,142}]{n32}
        \Vertex[x=-1, y=-1.75,  shape = diamond, RGB,color={142,142,142}]{n21}
        \Vertex[x=-2, y=-1,  RGB,color={253,170,97}, shape = diamond, size = .7, label = $w$]{w}
        \Vertex[x=-2, y=1,  RGB,color={142,142,142}]{n41}
        \Vertex[x=-1, y=1.75,  RGB,color={142,142,142}]{n42}

        \Vertex[x=3, y=0,  RGB,color={100,190,244}, label = $u$]{u}

        \Edge[lw=1](v)(n1)
        \Edge[lw=1](v)(n2)
        \Edge[lw=1](v)(n3)
        \Edge[lw=1](v)(n4)
        \Edge[lw=1](n11)(n1)
        \Edge[lw=1](n21)(n2)
        \Edge[lw=1](n31)(n3)
        \Edge[lw=1](n41)(n4)
        \Edge[lw=1](n12)(n1)
        \Edge[lw=1](w)(n2)
        \Edge[lw=1](n32)(n3)
        \Edge[lw=1](n42)(n4)

        \Edge[Direct, bend = -10](v)(u)
        \Edge[Direct,  bend = 25](w)(v)

        \Vertex[x=6.5, y=0, label=$v$,RGB,color={100,190,244}]{vc}
        \Vertex[x=6.5, y=0, RGB, opacity = 0, size = 2.75, style = dashed]{nbhd1c}
        \Vertex[x=6.5, y=0, RGB, opacity = 0, size = 1.55, style = dashed, label = $N(v)$, position= below]{nbhd2c}
        
        \Text[x=6.5,y=-2.5]{Figure \ref{fig:tikz}b: Action stage at time $t$}

        \Vertex[x=7.25, y=.75,  RGB,color={142,142,142}]{n1c}
        \Vertex[x=5.75, y=.75,  RGB,color={142,142,142}]{n4c}
        \Vertex[x=7.25, y=-.75,  RGB,color={142,142,142}]{n3c}
        \Vertex[x=5.75, y=-.75,  RGB,color={142,142,142}]{n2c}
        \Vertex[x=8.5, y=1, RGB,color={142,142,142}]{n11c}
        \Vertex[x=7.5, y=1.75, RGB,color={142,142,142}]{n12c}
        \Vertex[x=7.5, y=-1.75,  RGB,color={142,142,142}]{n31c}
        \Vertex[x=8.5, y=-1,  RGB,color={142,142,142}]{n32c}
        \Vertex[x=5.5, y=-1.75,  RGB,color={142,142,142}, shape = diamond]{n21c}
        \Vertex[x=4.5, y=-1,  RGB,color={253,170,97}, shape = diamond, size = .7, label = $w$]{wc}
        \Vertex[x=4.5, y=1,  RGB,color={142,142,142}]{n41c}
        \Vertex[x=5.5, y=1.75,  RGB,color={142,142,142}]{n42c}

        \Vertex[x=9.5, y=0,  RGB,color={100,190,244}, label = $u$]{uc}

        \Edge[lw=1](vc)(n1c)
        \Edge[lw=1](vc)(n2c)
        \Edge[lw=1](vc)(n3c)
        \Edge[lw=1](vc)(n4c)
        \Edge[lw=1](n11c)(n1c)
        \Edge[lw=1](n21c)(n2c)
        \Edge[lw=1](n31c)(n3c)
        \Edge[lw=1](n41c)(n4c)
        \Edge[lw=1](n12c)(n1c)
        \Edge[lw=1](wc)(n2c)
        \Edge[lw=1](n32c)(n3c)
        \Edge[lw=1](n42c)(n4c)

        \Edge[Direct, RGB, bend = -10, RGB,color={127,201,127}](vc)(uc)
        \Edge[Direct, RGB, bend = 25, RGB, color = {228, 110, 110}](wc)(vc)

        \Vertex[x=13, y=0, label=$v$,RGB,color={100,190,244}]{vb}
        \Vertex[x=13, y=0, RGB, opacity = 0, size = 2.75, style = dashed]{nbhd1b}
        \Vertex[x=13, y=0, RGB, opacity = 0, size = 1.55, style = dashed, label = $N(v)$, position= below]{nbhd2b}
        
        \Text[x=13,y=-2.5]{Figure \ref{fig:tikz}c: Network at time $t+1$}

        \Vertex[x=13.75, y=.75,  RGB,color={142,142,142}]{n1b}
        \Vertex[x=12.25, y=.75,  RGB,color={142,142,142}]{n4b}
        \Vertex[x=13.75, y=-.75,  RGB,color={142,142,142}]{n3b}
        \Vertex[x=12.25, y=-.75,  RGB,color={142,142,142}]{n2b}
        \Vertex[x=15, y=1, RGB,color={142,142,142}]{n11b}
        \Vertex[x=14, y=1.75, RGB,color={142,142,142}]{n12b}
        \Vertex[x=14, y=-1.75,  RGB,color={142,142,142}]{n31b}
        \Vertex[x=15, y=-1,  RGB,color={142,142,142}]{n32b}
        \Vertex[x=12, y=-1.75,  RGB,color={142,142,142}, shape = diamond]{n21b}
        \Vertex[x=11, y=-1,  RGB,color={253,170,97}, shape = diamond, size = .7, label = $w$]{wb}
        \Vertex[x=11, y=1,  RGB,color={142,142,142}]{n41b}
        \Vertex[x=12, y=1.75,  RGB,color={142,142,142}]{n42b}

        \Vertex[x=14.075, y=0,  RGB,color={100,190,244},label = $u$]{ub}

        \Edge[lw=1](vb)(n1b)
        \Edge[lw=1](vb)(n2b)
        \Edge[lw=1](vb)(n3b)
        \Edge[lw=1](vb)(n4b)
        \Edge[lw=1](n11b)(n1b)
        \Edge[lw=1](n21b)(n2b)
        \Edge[lw=1](n31b)(n3b)
        \Edge[lw=1](n41b)(n4b)
        \Edge[lw=1](n12b)(n1b)
        \Edge[lw=1](wb)(n2b)
        \Edge[lw=1](n32b)(n3b)
        \Edge[lw=1](n42b)(n4b)

        \Edge[lw=1](vb)(ub)

    \end{tikzpicture}
    \caption{Depiction of our proposal stage at time $t$, action stage at time $t$, and the network at time $t+1$ following these stages. Node shape represents agent attribute $\tau_v$; color is for attribute emphasis for relevant nodes. Each agent makes a proposal and responds with an action at time $t$; we omit most of these for clarity. At time $t$, agent $v$ proposes to $u$, the random revelation. Agent $w$ proposes to $v$. During the action stage, agent $v$ does not add $(w,v)$ while $u$ agrees to add $(u,v)$. Thus $u$ is now in $N(v)$.}\label{fig:tikz}
  \end{figure*}

\begin{algorithm}
\caption{Proposal and Action Stages}\label{alg:one}
\KwData{$G$ the current network}
\KwResult{$G'$ the network after actions}
$P$ an empty map from $V \to V$\;
\tcp{Proposal Stage}
\For{$v \in V$}{
  $C \gets$ the set of agents $v$ may propose to\;
  $C_\text{max} \gets$ the agent in $C$ that will maximize $v$'s utility\;
  $U_\text{max} \gets v$'s utility change after addition of $(v, C_\text{max})$\;
  \If{$U_\text{max}$ > 0 and $\delta(v) < \kappa$}{
    $P(v) \gets C_\text{max}$ \tcp{$v$ will propose to $C_\text{max}$}}
}
\tcp{Action Stage}
$B_\text{delete}, B_\text{add} \gets getBestActions(G, P)$ \tcp{get highest utility edge deletion and addition, assuming each proposal is accepted}
\For{$v \in V$}{
    \If{$\delta(v) < \kappa - 1$ and adding $(v,B_\text{add}(v))$ increases $v$'s utility more than deleting $(v,B_\text{delete}(v))$}{
        Add $(v,B_\text{add}(v))$ to $E$\;
    }
    \ElseIf{Deleting $(v,B_\text{delete}(v))$ increases $v$'s utility}{
        Remove $(v,B_\text{delete}(v))$ from $E$\;
    }
}
\end{algorithm}

\begin{description}[style=unboxed,leftmargin= .25cm]
    \item[1) Proposal Stage:] Let $d_G(u, v)$ be the distance between $u$ and $v$ in $G$.
    For each agent $v \in V$, an agent $u \in V \setminus \{v\}$ is revealed to $v$ uniformly at random, representing a chance encounter. 
    Then $v$ may choose to propose to one agent in $\{u\} \cup \{ w | d_G(v, w) = 2\}$. 
    We will denote the degree of agent $v$ as $\delta(v)$.
    Agents may only propose if their degree is less than the constraint; or $\delta(v) < \kappa$.
    Because $v$ may propose to at most one agent, they will propose to the agent giving the maximum utility.
    \item[2) Action Stage:] The choice set for agent $v$ comprises the nodes who proposed to $v$ and current neighbors of $v$; potential actions are proposal acceptance, edge deletion, or no action.
    If an agent has degree $\delta(v) \geq \kappa - 1$ and has made a proposal, we will require they take a cost non-increasing action. 
    Otherwise they will select whichever action maximizes their utility in this iteration, including doing nothing.
    If an agent has made a proposal, they assume their proposal has been accepted in the action stage.
    Each agent selects an action, and after every agent has selected, all agents execute their choices simultaneously.
    In a given iteration each agent may take one action.
\end{description}

\subsection{Stopping Criteria} \label{stop_criteria}

We define a triad of agents $u, v, w \in V$ as stable if the subgraph induced by $u, v, w$ is of maximum utility for all agents; \ie no agent would choose to drop an edge.
We stop our network formation simulations when the number of stable triads steadies.

\begin{definition}[Stable triads]
Let $G[u, v, w]$ be the subgraph of $G$ induced by vertices $u, v, w \in V$.
Agents $u, v, w$ form a stable triad if and only if one of the following cases is true (see Figure \ref{fig:stable_triads}):
\begin{enumerate}
    \item $s_u = s_v = s_w = (L_e, H_m)$ and $\tau_u = \tau_v = \tau_w$. The fully connected subgraph induced by $u$, $v$, $w$ $G[u, v, w]$, is stable. A triple of agents with a preference for embeddedness and homophily of the same type achieve maximum utility from their connection.
    \item $s_u = s_v = s_w = (L_c, H_m)$ and $\tau_u = \tau_v = \tau_w$. Then $G[u, v, w]$ with $(u,v), (u,w) \in E, (v,w) \not \in E$ is stable. A triple of homophilic agents with a preference for social capital who are on a line is also stable as $v$ connecting to $w$ would decrease the utility for all agents.
    \item $s_u = s_v = s_w = (L_c , H_r)$. Suppose $\tau_v = \tau_w$ and $\tau_u \neq \tau_v,\tau_w$. Then $G[u, v, w]$ with $(u,v), (u,w) \in E, (v,w) \not \in E$ is stable. Similar to the prior case, a triple of heterophilic social capital agents on a line is also stable.
\end{enumerate}
\end{definition}

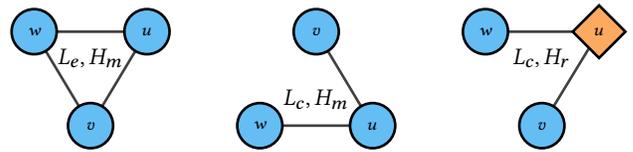
\begin{figure}[h]
      \centering
      \begin{tikzpicture}
        \Vertex[x=0, y=0, label=$v$,RGB,color={100,190,244}]{v1}
        
        \Vertex[x=.75, y=1.25,  label = $u$, RGB,color={100,190,244}]{u1}
        \Vertex[x=-.75, y=1.25,  label=$w$, RGB,color={100,190,244}]{w1}

        \Text[x=0, y=.9]{$L_e, H_m$}
        
        \Edge[lw=1](v1)(w1)
        \Edge[lw=1](v1)(u1)
        \Edge[lw=1](w1)(u1)

        \Vertex[x=3, y=1.25, label=$v$,RGB,color={100,190,244},label = $v$]{v2}
        \Vertex[x=3.75, y=0,  RGB,color={100,190,244},label = $u$]{u2}
        \Vertex[x=2.25, y=0,  RGB,color={100,190,244},label = $w$]{w2}

        \Text[x=3, y=.35]{$L_c, H_m$}
        
        \Edge[lw=1](u2)(w2)
        \Edge[lw=1](u2)(v2)

        \Vertex[x=6, y=0, label=$v$,RGB,color={100,190,244}]{v3}
        
        \Vertex[x=6.76, y=1.25,  label = $u$, shape = diamond, RGB,color={253,170,97}, size =.7]{u3}
        \Vertex[x=5.25, y=1.25,  label=$w$, RGB,color={100,190,244}]{w3}

        \Text[x=6, y=.9]{$L_c, H_r$}
        
        \Edge[lw=1](v3)(u3)
        \Edge[lw=1](w3)(u3)

    \end{tikzpicture}
    \caption{Three cases of stable triads--- agents $u,v,w$.}\label{fig:stable_triads}
  \end{figure}

\begin{theorem}
The number of stable triads in $G$ is non-decreasing.
\end{theorem}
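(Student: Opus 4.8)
The plan is to prove the stronger statement that a triad which is stable at iteration $t$ is still stable at iteration $t+1$; since no stable triad is ever destroyed, the set of stable triads only grows and its cardinality is non-decreasing. So fix agents $u,v,w$ forming a stable triad at time $t$, say of Case~$k$ ($k\in\{1,2,3\}$) of the definition. The strategies $s_\cdot$ and types $\tau_\cdot$ are immutable, so the strategy/type requirements of Case~$k$ hold at every time; it remains to show the \emph{induced subgraph} $G_{t+1}[u,v,w]$ still matches the edge pattern of Case~$k$. Because every agent changes at most one incident edge per iteration, and the three patterns involve only the three potential edges among $u,v,w$, it suffices to verify: (i) no agent ever deletes an edge of the triad, and (ii) in Cases~2 and~3 the missing edge $(v,w)$ is never added.

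For (i) I would check, in each case and for each vertex $x\in\{u,v,w\}$ incident to a triad-edge $(x,y)$, that deleting $(x,y)$ strictly decreases $U_x$, so that no utility-maximizing agent ever performs such a deletion. In Case~1 ($(L_e,H_m)$, common type, triangle), deleting $(u,v)$ costs $u$ the same-type neighbor $v$, so $U^a_u$ drops by $1/\kappa$, and it destroys the triangle $uvw$, so $U^s_u$ drops by at least $1/\binom{\kappa}{2}$; since an edge deletion can never raise any term of $U_u$, the net change is strictly negative, and by symmetry the same holds for every triad-edge and vertex. In Cases~2 and~3 each present triad-edge joins $u$ to a neighbor of exactly the type demanded by the homophilic/heterophilic strategy, so deleting it again drops the attribute term by $1/\kappa$; the $L_c$ structural term $I_\cdot/\kappa$ can only move in the agent's favor, and one shows the gain there never exceeds the attribute loss in these configurations, so the net change is non-positive and in fact strictly negative. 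In particular, a degree-$(\kappa-1)$ agent who has proposed and is thereby constrained to a cost-non-increasing action strictly prefers ``do nothing'' over any triad-edge deletion, so the degree-constraint rule introduces no exception.

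For (ii), note that an edge is created only if it strictly increases the utility of \emph{both} endpoints: the recipient of a proposal accepts only when it strictly benefits, and an agent proposes only when the new edge strictly benefits it. So it suffices to exhibit one endpoint of $(v,w)$ whose utility is not strictly increased by adding the edge. The key fact is that $(u,w)\in E$ already, so adding $(v,w)$ places $u$ and $w$ in the same component of the subgraph induced by $N(v)$: the new neighbor $w$ contributes $0$ to $I_v$ and $u$ ceases to contribute to it, so the $L_c$ term of $U_v$ drops by $1/\kappa$, offsetting the $1/\kappa$ gain in $v$'s attribute term from the new same-type (Case~2) resp.\ opposite-type (Case~3) neighbor; hence $v$'s utility change is $\le 0$, and (symmetrically, via $N(w)$) the same holds for $w$. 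Therefore $(v,w)$ is never added, and $G_{t+1}[u,v,w]$ reproduces the edge pattern of Case~$k$, so the triad remains stable.

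I expect step (ii), together with the borderline sub-cases of step~(i), to be the main obstacle: the $L_c$ utility depends on the fine structure of $I_\cdot$, which in turn depends on how $u$, $v$, and $w$ sit inside their full neighborhoods rather than in an isolated triad, so the argument must carefully account for which neighbors are independent before and after the edge change — in particular it relies on $u$ being an independent node in $N(v)$ (resp.\ $N(w)$), a property that needs to be extracted from the ``maximum-utility'' requirement built into the definition of a stable triad, rather than assumed. Once (i) and (ii) are established for all three cases, the theorem follows: no stable triad is lost between consecutive iterations, so their number is non-decreasing.
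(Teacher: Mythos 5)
Your overall strategy is the same as the paper's: show that a stable triad, once formed, is never destroyed, by checking that no incident agent would delete a triad edge and that the missing edge in the line configurations is never added. Case~1 and the skeleton of Cases~2--3 match the paper's proof. However, you have correctly located --- but not closed --- the one step on which the whole argument turns, and it is a genuine gap rather than a routine verification.

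The missing lemma is that in the $L_c$ cases the endpoints $u,v$ of a triad edge can have \emph{no common neighbor} $x$ outside the triad. This single fact is what makes both of your steps go through: for (i), if such an $x$ existed, deleting $(u,v)$ would cost $u$ the $1/\kappa$ attribute term but could simultaneously render $x$ isolated in $N(u)$, gaining exactly $1/\kappa$ of structural utility, so the deletion is utility-neutral rather than strictly harmful; for (ii), ``$u$ is isolated in $N(v)$'' is literally the statement that $u$ and $v$ share no common neighbor, and without it the addition of $(v,w)$ leaves $I_v$ unchanged while gaining $1/\kappa$ of attribute utility, so $v$ would \emph{want} to close the triangle and the line triad would not persist. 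Your proposal asserts the needed inequalities (``the gain there never exceeds the attribute loss,'' ``$u$ ceases to contribute to $I_v$'') and then flags in the final paragraph that the independence of $u$ in $N(v)$ ``needs to be extracted\ldots rather than assumed,'' but never extracts it. The paper closes this gap with a dynamic argument: it asks how the edges $(u,x)$ and $(v,x)$ could ever have arisen, observes that whichever of them is added second would give its $L_c$ endpoint at most $1/\kappa$ of attribute utility while destroying $1/\kappa$ of structural utility (so it would never be chosen), and that simultaneous addition yields no expected gain because agents evaluate candidates under the assumption that their own proposal is accepted; hence no such $x$ exists. Without some argument of this kind --- one that appeals to the formation process, not just to the static utility functions --- your steps (i) and (ii) do not follow, so the proposal as written does not yet constitute a proof.
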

\begin{proof}

By definition, the subgraph induced by a stable triad contains nodes at maximum utility. Thus, alteration of the triad by edge addition or deletion would strictly harm all agents.
Note that for $s_u, s_v \in \{H_m\} \times L$ and $\tau_u = \tau_v$, if $(u,v) \in E$ then removing $(u,v)$ decreases $|N^\tau(u)|$ by $1$ for $u$ and decreases $|N^\tau(v)|$ by 1 for $v$.

For the following cases we consider any stable triad $u, v, w$. When we consider alteration of one edge (\eg (u,v)), the proof extends to any equivalent edge (\eg (u,w)).

\textit{Case 1}: $s_u = s_v = s_w = (L_e, H_m)$.
Let us consider the deletion of $(u,v)$.
As $\tau_u = \tau_v$ this gives a reduction of $\frac{1}{\kappa}$ in $U_u^a$ and $U_v^a$.
Furthermore this must decrease $\Delta_u$, $\Delta_v$, and $\Delta_w$. 

\textit{Case 2}: $s_u = s_v = s_w = (L_c , H_m)$ or $s_u = s_v = s_w = (L_c , H_r)$. The heterophily case is symmetric.
Consider the deletion of $(u,v)$.
As $\tau_u = \tau_v$ this gives a $\frac{1}{\kappa}$ reduction in $U_u^a$ and $U_v^a$.

Suppose there exists an agent $x$ such that $(u,x), (v,x) \in E$.
Deleting $(u,v)$ increases the structural utility of $u$ by at most $\frac{1}{\kappa}$ as $x$ may now be an isolated neighbor of $u$.
Because agents may make at most two connections in an iteration, 
if such a $x$ were to exist then either $(u,x)$ was added first, $(v,x)$ was added first or $(u,x)$ and $(v,x)$ were added simultaneously.
If WLOG $(u,x)$ was added first then $(v,x)$ would provide $v$ with at most $\frac{1}{\kappa}$ attribute utility and reduce structural utility by the same.
Then it must be that $(u,x)$ and $(v,x)$ are added simultaneously.
However, because agents assume their proposals are accepted when evaluating candidates, there is no expected gain in $U_u$ or $U_v$.
Thus no such $x$ may exist.

Next we consider the addition of $(v,w)$.
Because $v$ and $w$ are both connected to $u$, $I_v$ and $I_w$ would both be reduced by the addition of $(v,w)$, reducing $U_v^s$, $U_w^s$, and $U_u^s$.
\end{proof}

As the number of stable triads in $G$ is non-decreasing and $G$ is finite, the number of stable triads must eventually converge.
In simulation we consider the stable triad count to stabilize when standard deviation of the prior $\kappa$ iterations is less than some small $\varepsilon$. We choose $\kappa$-many iterations because this would allow for each agent to completely change their neighborhood.

Though we do not prove this, we conjecture that convergence occurs in $O(\text{poly}(|V|, \kappa))$.
Consider a network with countably infinite vertices.
If $\kappa = 1$, it could not be the case that the network does not converge; in fact it would converge very quickly.
Even for larger values of $\kappa$, maximum degree agents in stable triads are unlikely to continually switch between neighbors, as stable triad neighbors grant high payoff.
By results from the coupon collector's problem, because of uniformly random revelations, agents will quickly find neighbors who grant them attribute utility.
In practice our simulations do not require a large number of iterations.

\graphicspath{{./figures}}

\section{Simulations}
In this section we describe our first contribution: formation of evolutionarily stable communities. We explore the parameter space via simulation to demonstrate our model's ability to create networks with varied stable community structures.

In order to effectively explore the parameter space, we fix some parameters, while varying others.
We use classical sociological results to fix values for $|V|$ and $\kappa$. From Dunbar \cite{Dunbar2009}, people can maintain active communities of $\approx150$; we fix $|V|=150$ for our simulations. Dunbar also claims that a person's 5 strongest connections are their ``support clique'', or core group of emotional support. The closest $15$ are the ``sympathy group'', who provide high cost support and time.
We are interested in active relationships, though not necessarily restricted to only the closest connections. Thus we fix $\kappa = 10$.
We also fix $\Omega$ as a uniform distribution over two attributes for simplicity.
We independently vary $\alpha$ and $\beta$ over a $9\times9$ parameter space grid: $[0,1]\times[0,1]$.
These parameters most directly control community structure and assortativity.
To account for variation due to chance encounters in the proposal stage, we run each grid cell ten times and report the average metrics.

\captionsetup[figure]{font=small,skip=0pt}
\begin{figure}[b]
    \begin{subfigure}[t]{.235\textwidth}
        \centering
    \includegraphics[width = \textwidth]{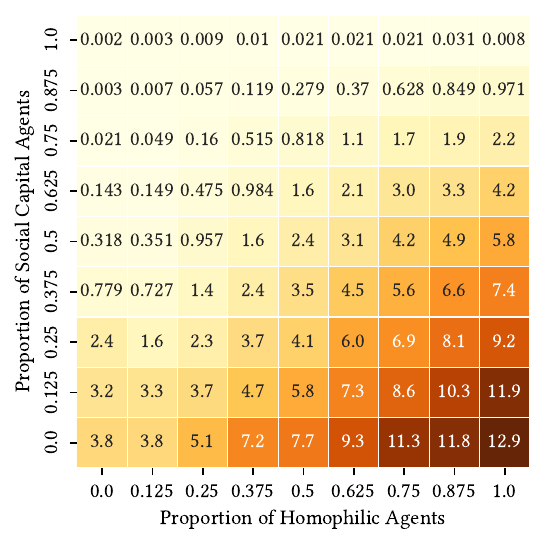}
    \caption{Triangle Count (hundreds)}
    \label{fig:std_tri}
    \end{subfigure}
    \begin{subfigure}[t]{.235\textwidth}
        \centering
    \includegraphics[width = \textwidth]{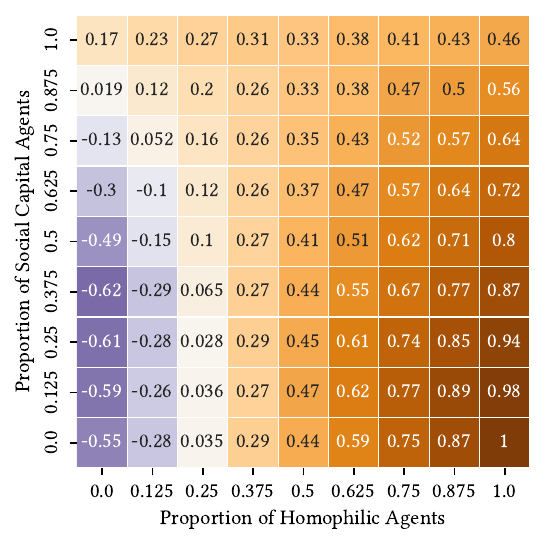}
    \caption{Assortativity Coefficient}
    \label{fig:std_assort}
    \end{subfigure}
    \begin{subfigure}[t]{0.235\textwidth}
         \centering
         \includegraphics[width=\textwidth]{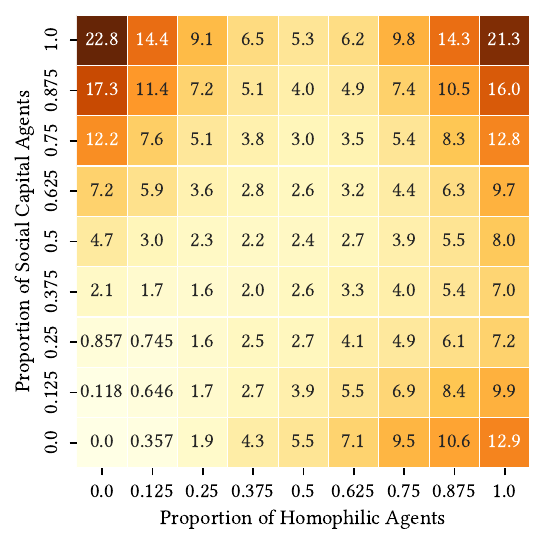}
         \caption{Stable Triad Count (hundreds)}
         \label{fig:std_stable_triad}
     \end{subfigure}
     \begin{subfigure}[t]{0.235\textwidth}
        \centering
        \includegraphics[width=\textwidth]{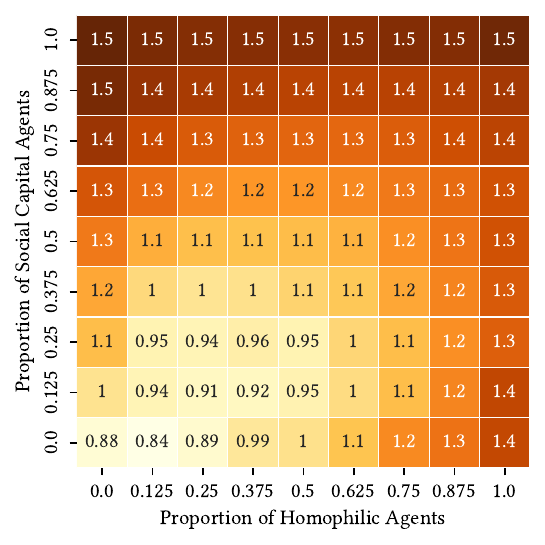}
        \caption{Average Utility of Agents}
        \label{fig:std_util}
    \end{subfigure}
    \begin{subfigure}[t]{.235\textwidth}
         \centering
         \includegraphics[width=\textwidth]{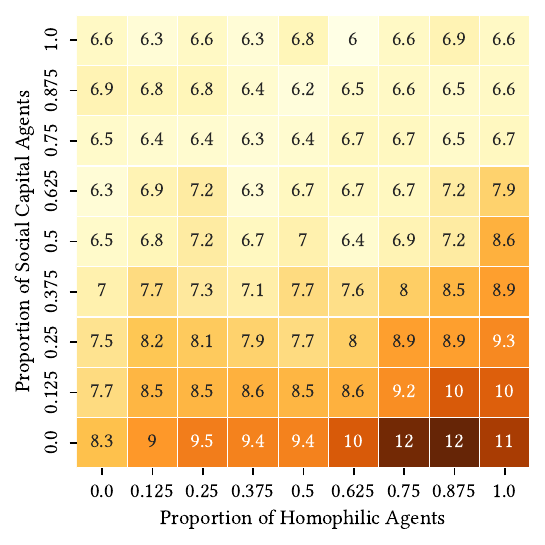}
         \caption{Number of Louvain communities}
         \label{fig:std_num_comm}
     \end{subfigure}
     \begin{subfigure}[t]{.235\textwidth}
         \centering
         \includegraphics[width=\textwidth]{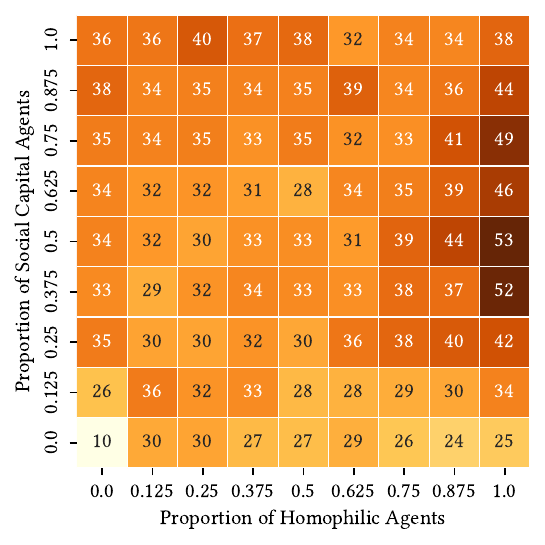}
         \caption{Terminating iteration}
         \label{fig:std_exit}
     \end{subfigure}

    \caption{Metrics over parameter space. Subfigure (b) shows that the proportion of social capital agents modulates the rate of change of the assortativity coefficient. Subfigures (c) and (d) show a bowl shape; stable communities form at the corners of our parameter space but not when embeddedness and heterophily are both high. The number of communities in (e) correlates strongly with the degree of embeddedness as agents desiring embeddedness form more neighborhoods that are closely knit.}
    \label{fig:eval_metrics}
    
\end{figure}

Triangle count depends strongly on chosen parameters.
As expected, when a higher proportion of agents follow the embeddedness and  homophily strategies, more triangles are produced (Figure \ref{fig:std_tri}).
Embedded agents' utility functions cause them to directly maximize triangle count.
The correlation with homophilous preferences is due to our use of a binary attribute --- when two agents with heterophilous preferences connect on the basis of type, a third agent may give attribute utility to at most one of these agents.

Assortativity coefficient (Figure \ref{fig:std_assort}) depends primarily on the level of homophily, which follows from the explicit optimization for assortativity and disassortativity by homophilous and heterophilous agents respectively.
There is a weaker inverse correlation in the magnitude of the assortativity coefficient based on $\beta$.
This may be due to agents preferring social capital deriving utility from any disconnected agent regardless of assortativity.

Due to the high variance in topology which results from different parameter values, our model can produce a wide variety of community structures. When agents prefer embeddedness the network forms cliques which are loosely connected. This roughly matches the idea of the strength of weak ties \cite{Granovetter1973} wherein agents may form strong, tightly knit communities which are more loosely connected. Similarly, higher preference for homophily may create more tightly knit communities as cliques of uniform type may more easily form. This is reflected by the number of communities (Figure \ref{fig:std_num_comm}) detected by the Louvain algorithm increasing with preference for embeddedness and homophily.

\section{Ablations}\label{sec:ablations}

In this section we present results for our second contribution: identification of necessary agent behaviors. 
Our model relies upon boundedly-rational, resource constrained agents pursuing attribute and structural objectives to produce complex communities.
Thus we utilize ablation tests to determine the necessity of each of these properties.
Each ablation test removes only one property.

\begin{description}[style=unboxed,leftmargin=.35cm]
\item[No resource constraint:] Removal of resource constraint. Effectively, $\kappa = |V| - 1$; an agent can form as many edges as possible.
\item[Global knowledge:] Removal of local knowledge constraint. Agents are able to propose to anyone in the network, not just those within distance two and those who are revealed to them.
\item[Ignore attribute:] Removal of payoff from homophily and heterophily. Agents solely care about structural properties.
\item[Ignore structure:] Removal of payoff from social capital and embeddedness. Agents solely care about neighbors' attributes.
\end{description}

\begin{figure}[]
\captionsetup{justification=centering} 
    \begin{subfigure}{.48\columnwidth}
    \centering
        \includegraphics[width = \textwidth, trim={0cm 1cm 0cm 1cm}, clip]{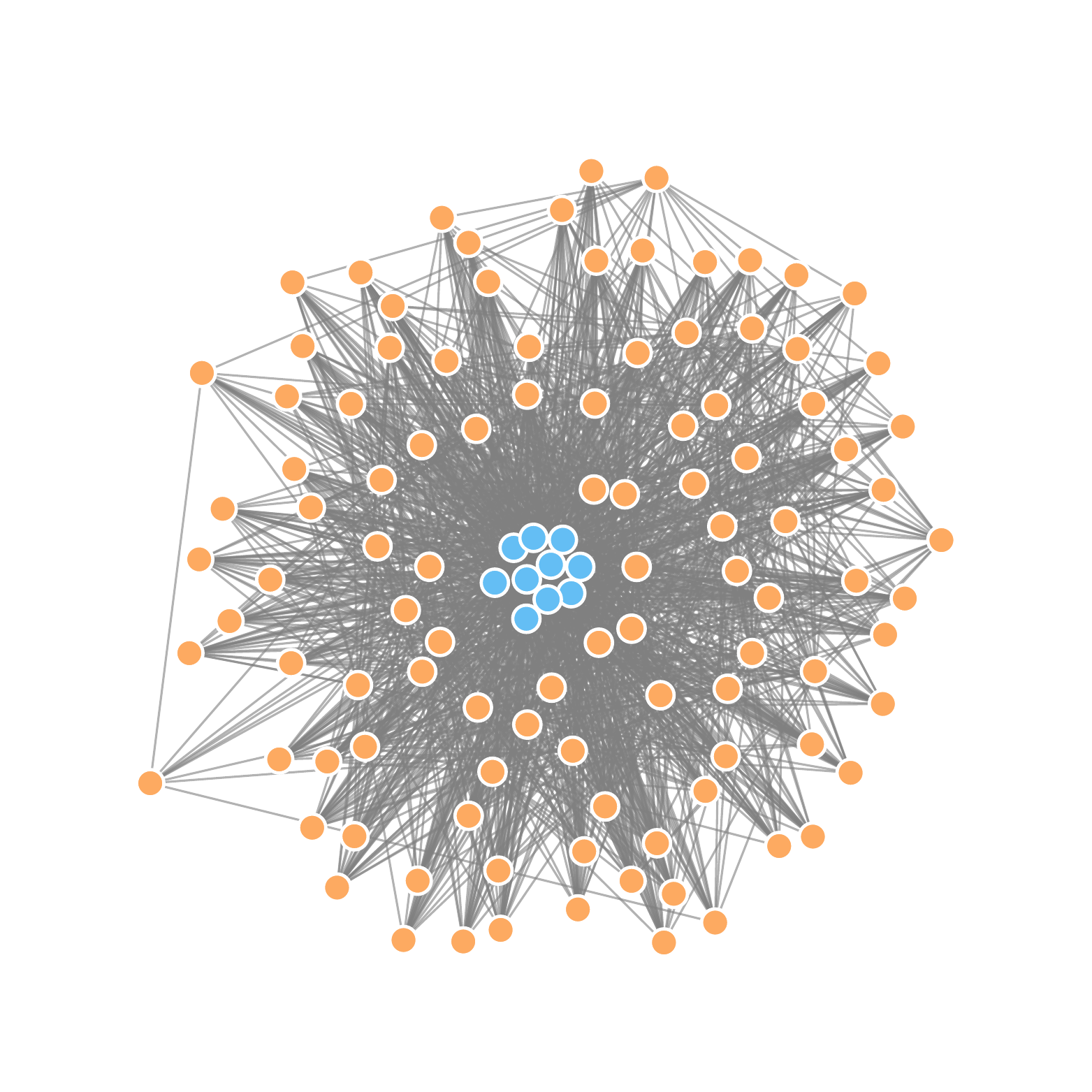}
        \caption{No Resource Constraint}
        \label{fig:abl_nobudget}
    \end{subfigure}
\hfill
    \begin{subfigure}{.48\columnwidth}
    \centering
        \includegraphics[width = \textwidth, trim={0cm 1cm 0cm 1cm}, clip]{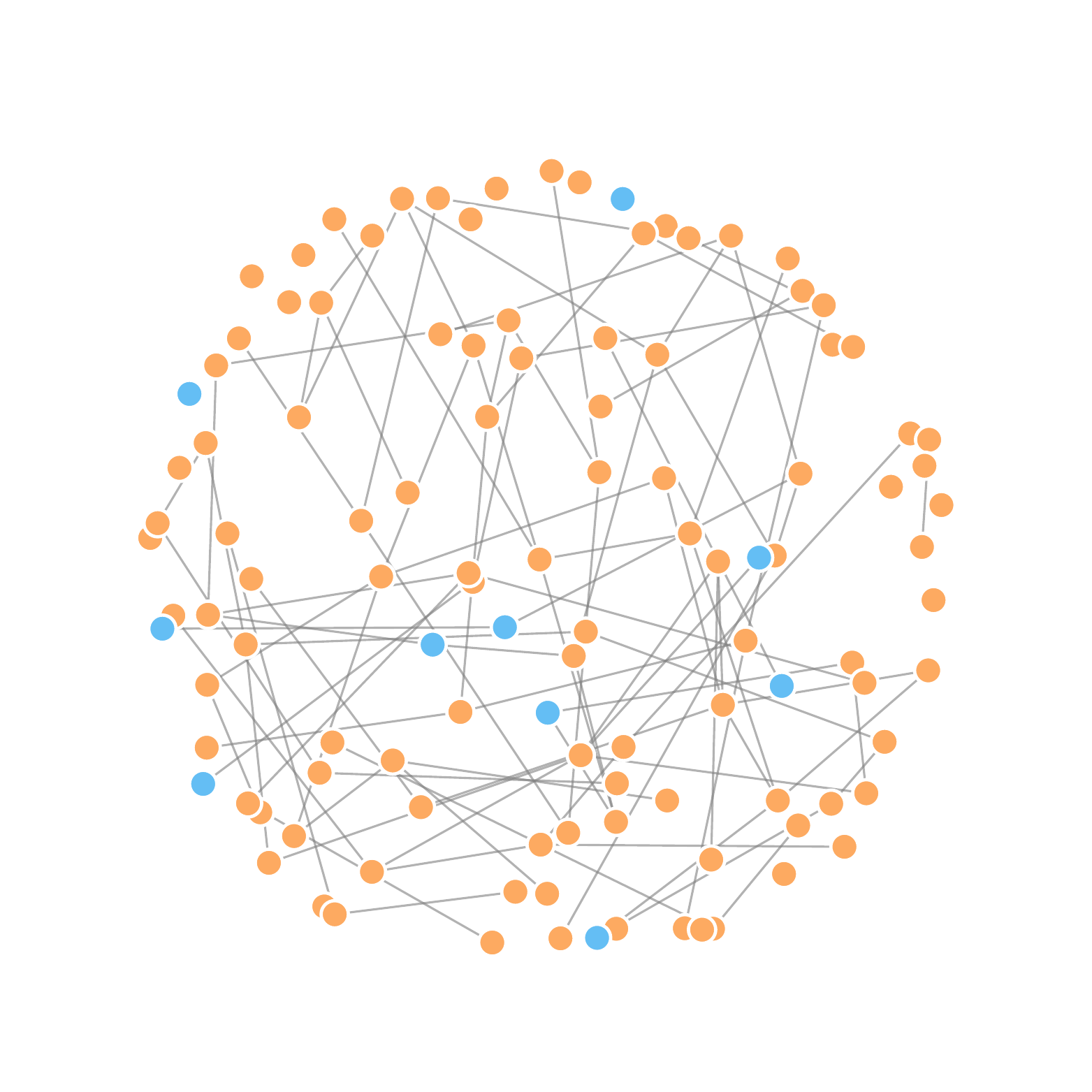}
        \caption{Global Knowledge}
        \label{fig:abl_nolocal}
    \end{subfigure}
 \begin{subfigure}{.48\columnwidth}
    \centering
        \includegraphics[width = \textwidth, trim={0cm 1cm 0cm 1cm}, clip]{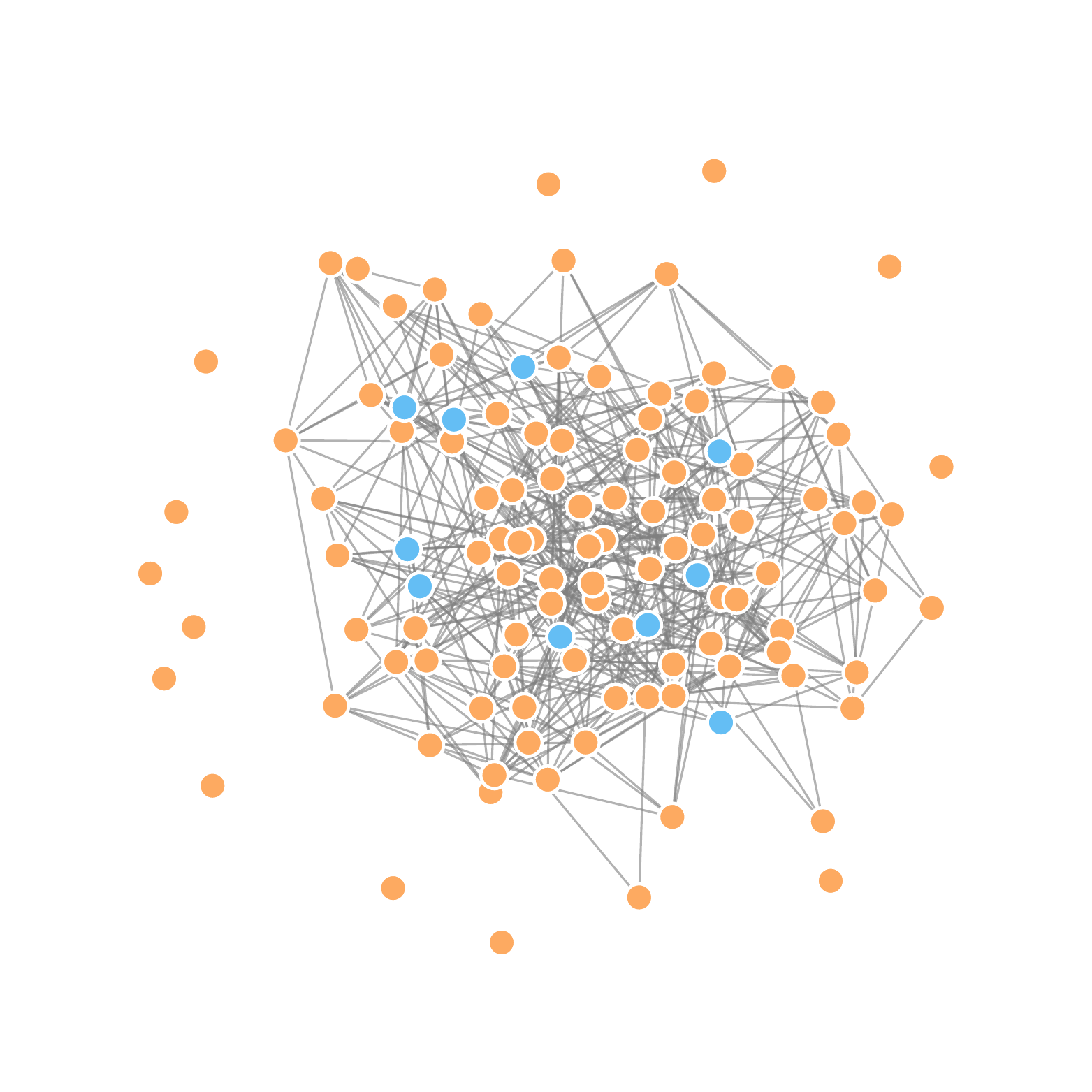}
        \caption{Ignore Attribute}
        \label{fig:abl_noattr}
    \end{subfigure}
\hfill
    \begin{subfigure}{.48\columnwidth}
    \centering
        \includegraphics[width = \textwidth, trim={0cm 1cm 0cm 1cm}, clip]{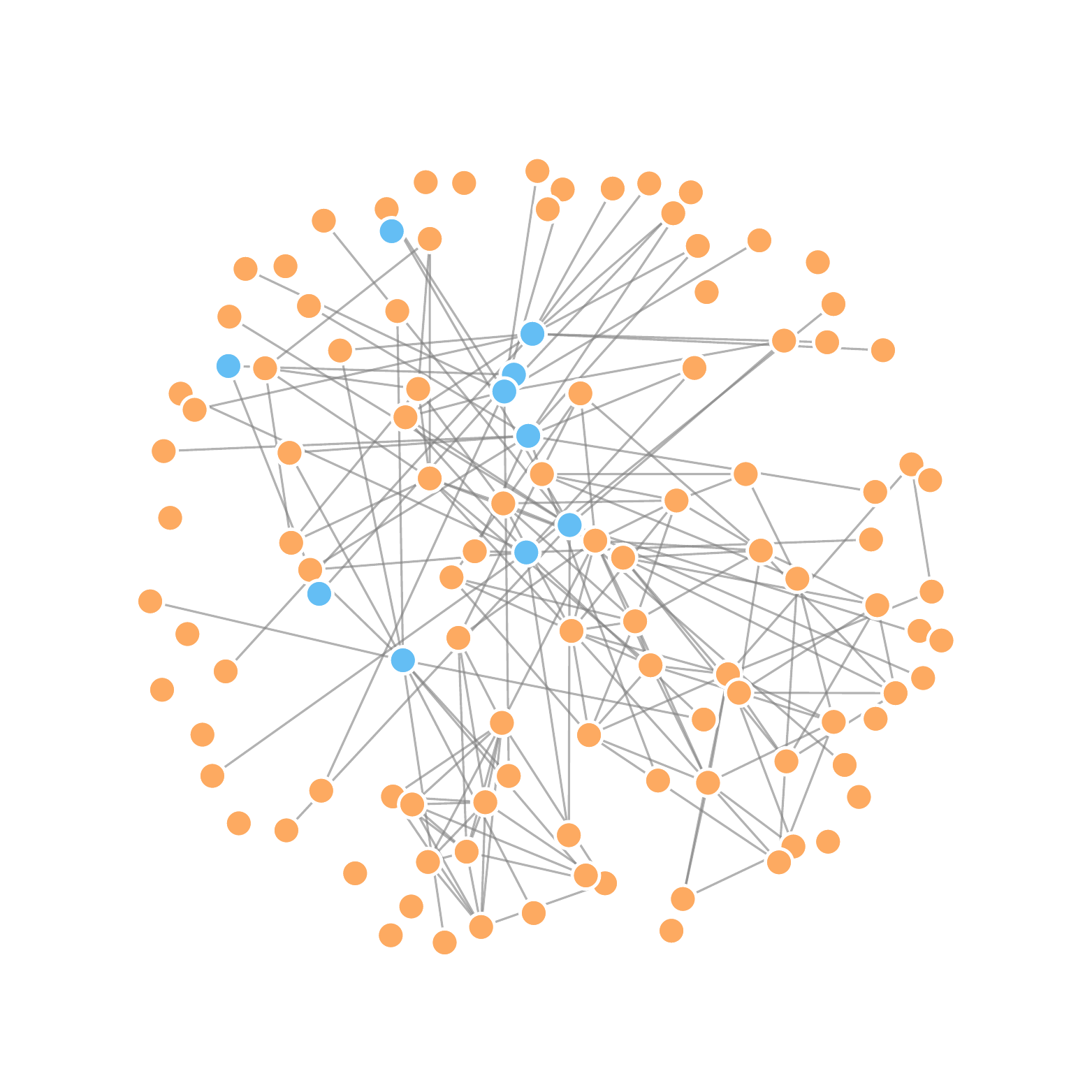}
        \caption{Ignore Structure}
        \label{fig:abl_nostruct}
    \end{subfigure}

    \caption{Ablation networks fit to Village 6. Color represents attribute. Notice the dense network in (a) due to lack of budget. Refer to Figure \ref{fig:comparison_networks}(a) and \ref{fig:comparison_networks}(b) for the actual network and our fit.}
    \label{fig:ablation_networks}
    \vspace{-1em}
\end{figure}

\begin{table}
 \caption{Ablation fits for two small villages with good standard model fit (top) and bad standard model fit (bottom).}
  \centering
  \subfloat{
    \begin{tabularx}{\linewidth}{@{}X c c c@{}}
         \textsc{Village 6 models} & \textsc{Best} $\alpha$ & \textsc{Best} $\beta$ & \textsc{\textbf{Loss}} \\
         \midrule
        No resource constraint & 0.2500 & 1.0000 & 0.1770 \\
        Global knowledge & 0.3750 & 0.5000 & 0.0843 \\
        Ignore attribute & 0.3750 & 0.5000 & 0.0658 \\
        Ignore structure & 0.3750 & 0.5000 & 0.2689 \\
        Standard (no ablation) & 0.1875 & 0.3750 & \textbf{0.0419}
    \end{tabularx}
  }
  \vspace{0.5em}
  \subfloat{
    \begin{tabularx}{\linewidth}{@{}X c c c@{}}
         \textsc{Village 10 models} & \textsc{Best} $\alpha$ & \textsc{Best} $\beta$ & \textsc{\textbf{Loss}} \\
         \midrule
        No resource constraint & 0.5000 & 1.0000 & 0.2643 \\
        Global knowledge & 0.5000 & 0.5000 & 0.1863 \\
        Ignore attribute & 0.2500 & 0.1250 & \textbf{0.0866} \\
        Ignore structure & 0.3750 & 0.5000 & 0.2552 \\
        Standard (no ablation) & 0.4375 & 0.2500 & 0.1228
    \end{tabularx}
  }
  \label{tab:ablation_fits}%
  \vspace{-1em}
\end{table}

Our ablations reveal the strength of our ecologically valid model (see Table \ref{tab:ablation_fits}, ablation networks in Figure \ref{fig:ablation_networks}; original network and our model fit in Figure \ref{tab:baseline_fit} ). The standard model generally outperforms the ablation models in both Village 6 (where our standard model achieves low loss) and Village 10 (where our standard model loss is higher). The standard model is outperformed only by the ablation where attribute is ignored for Village 10.

Here, the lower loss comes from the ablation model more accurately matching the distribution of assortativity coefficients.
The better performance of the ablation is possibly due to the model's ability to produce agent neighborhoods with mixed attributes. Our standard model assigns each agent a pure strategy of homophily or heterophily, which makes achieving a mixed neighborhood difficult.
Though agents in the global knowledge ablation have more information, it doesn't capture the fact that individuals in real networks are not omniscient, and likely make friends through mutual connections. Removing resource constraints produces a more dense network than real networks since individuals in real networks do not have the resources to befriend everyone.  Similarly, ignoring structure does not produce sufficiently dense neighborhoods.

Further modeling subtleties may be introduced to increase our accuracy.
A strict preference per agent for structure or attribute is likely too coarse; agents may exhibit different preferences over different subsets of the population.
A vector of attributes would more accurately capture true attribute spaces with utility derived from vector similarity or dissimilarity.
Agents could have homophilous or heterophilous preferences over different subsets of the attributes.

\section{Experiments}

In this section we discuss our third contribution: interpretability. We evaluate our model on real network datasets and compare our results to baselines. Unlike the baseline models, we are able to identify population-level mixing proportions over strategies which help us explain and predict the network outcome.

\subsection{Dataset}

We carefully chose a network dataset that best captured the intended use of our model. We require that a network be: attributed, undirected, and social, with active, voluntary edges.
The village microfinance dataset collected by Banerjee et al. \cite{banerjee2013diffusion} meets our inclusion criteria. We use this dataset to demonstrate that our model can be fit to 75 unique social attributed networks.

The village network data comes from a survey of social networks in 75 Indian villages.
The village networks are disjoint and the dataset contains demographic information for each household.
We choose to observe the network at the level of the household rather than the individual as survey eligibility was determined by the presence of a head of household finances.
Additionally this contraction avoids spurious homophily, heterophily, and triangles.

Banerjee et al. \cite{banerjee2013village} provide twelve edge-sets for each of the 75 villages, corresponding to household survey questions
(\eg who would you borrow money from, give advice to, etc.). 
Though the questions are directed, the authors provide undirected edges. 
We desire edges that encompass a wide range of social interactions, as our model is a general social network formation model. To ensure this, we consider the union over all edge-sets \textit{minus} any edges that involve exchange of money or goods. We exclude monetary edges because these are not purely social interactions; they require some resources outside what is required for friendship.

We use the survey data to assign household attributes. We assume that number of rooms in a home is a proxy for wealth. We normalize this by dividing the number of rooms by the square root of number of beds + 1 \cite{Hussain2008}. We desire a binary attribute for wealth for each household, so we categorize households as having $\leq 2$ normalized beds, or $>2$. We choose this cutoff because it splits the households most evenly into two categories over all villages.

\subsection{Baseline Models}

Here we describe the baseline models; we attempted a best faith implementation of both. While Mele \cite{mele2022structural} provided code, neither it nor the required packages were documented enough to utilize.

\begin{description}[style=unboxed,leftmargin=.35cm]

\item[Mele \cite{mele2022structural}]
This community formation model places attributed agents in unobserved communities at $t=0$.
In each iteration, a pair of agents is selected randomly with probability depending on network structure, community membership, and agents' attributes.
Agents keep or create an edge if the sum of their utilities for the edge is positive. Some random variation is introduced to model unobserved strategies.
The utility of an agent is dependent on the network structure, community membership, and agents' attributes.
The paper provides six model variations; results show that one performs best, so we utilize this model for our analysis. 
\item[Christakis et. al. \cite{christakis2020empirical}]
This community formation model has agents that use a utility function parameterized by: a preference over attributes, for homophily, for the number of common connections, for distance in the network, and over the value of an observable characteristic.
Agents have identical preferences.
One potential edge is introduced in each iteration; if the utility for the edge is positive for both agents then the edge is formed.
\end{description}

Both baselines, like our model, consider attribute and structure in formation of edges. In contrast to our own model, both baselines lack resource constraints and have hard to explain parameters. 

\subsection{Evaluation Metrics}\label{eval_metrics}
We refer to an observed network as $G^d$, and use $\widehat{G}^d$ when referring to a simulated network which attempts to replicate $G^d$.
Here we describe the metric used to evaluate this replication.
For each vertex we calculate attribute assortativity $a_i$ and local clustering coefficient $c_i$.
The assortativity coefficient $a_i$ indicates the level of homophily or heterophily exhibited by a vertex $v$, and is given by $a_i = \frac{1}{2}\left( \frac{2 N^\tau(v) - \delta(v)}{\delta(v)} + 1 \right) $.
An assortativity coefficient of one indicates complete homophily and zero complete heterophily.
The clustering coefficient of a vertex $v$ is given by $\frac{\Delta_v}{\delta(v) (\delta(v) - 1)}$.
For both networks $\widehat{G}^d$ and $G^d$, we calculate a distribution of these pairs $(a_i, c_i)$ and a weight $p_i$ which corresponds to the relative frequency of $(a_i, c_i)$ in each network.
We utilize the 1-Wasserstein distance as a measure of similarity between distributions.
The distance between points in the distribution is given by the L-2 (Euclidean) norm.

We also include a global component in our loss metric to ensure good fit on global assortativity and edge density.
A network with many edges could be similar in distribution to one with few edges if they have similar local clustering coefficients. The global component alleviates this issue.
Thus, we compute the difference between triangle count and assortativity coefficient by a symmetric mean absolute percentage error, and take the average of these errors. This makes up the global component.
Note that very sparse networks have undefined clustering coefficients which we set to zero.
We report the average of distributional and global components as the loss between the simulated $\widehat{G}^d$ and the observed network $G^d$.

\subsection{Parameter Fitting} \label{param_fit}

Given an attributed network dataset, some parameters are defined by the dataset. Our model has five free parameters: $|V|$, $\kappa$, $\Omega$, $\alpha$, and $\beta$. We look only at static datasets, so $|V|$ and $\Omega$ are fixed. The remaining three parameters need to be discovered.

We use grid search to discover the best fitting $\kappa$, $\alpha$, and $\beta$ for $G^d$. 
We consider 75 village networks, and we desire a shared $\kappa$ over all villages, since $\kappa$ represents a cognitive constraint on the ability of individuals to form edges. 
To achieve this, we run a coarse grid search to determine the best-fitting $\kappa \in \{5,10,15\}$. 
Then, using this $\kappa$, we find the best fitting $\alpha$ and $\beta$ on a finer grid. 
The parameters in the baseline models \cite{mele2022structural, christakis2020empirical} are neither explanatory nor bounded. To fit them, we execute a random search over a plausible parameter space informed by the reported parameters in the papers and then a finer search over the discovered best-fitting space.
We utilize our evaluation metrics (Section \ref{eval_metrics}) to determine best fit.

\subsection{Our Experimental Results}

We now present the results of our model fitting, and compare to the baseline models. We executed all simulations five times, computed the losses from these networks, and then averaged the losses. 

We show results from fitting our model to the microfinance dataset in Figure \ref{tab:village_fit}. We only show the five best-fitting and five worst-fitting villages, along with Village 10, due to space constraints. In our initial coarse-grained fitting, we found the optimal $\kappa$ value was 10. Note that since we also ran our simulations with $\kappa=10$, Figure \ref{fig:eval_metrics} gives an idea of how metrics change with values of $\alpha$ and $\beta$.

 \begin{table}[]
    \caption{Village Microfinance Fitting Results. We include the best-fitting (top half) and worst-fitting (bottom half) villages in order of increasing loss. We also include Village 10 for the baseline comparisons.}
    \label{tab:village_fit}
    \centering
    \begin{tabularx}{.95\linewidth}{@{}r | c r c c c c@{}}
         \textsc{Village} & $|V|$ & \textsc{Assort.} & \textsc{Tri.} & \textsc{Best} $\alpha$ & \textsc{Best} $\beta$ & \textsc{\textbf{Loss}}\\
         \midrule
        4 & 239 & -0.031 & 154 & 0.4375 & 0.5625 &\textbf{0.027} \\
        3 & 292 & 0.001 & 189 & 0.3750 & 0.5000 & \textbf{0.037} \\
        8 & 94 & 0.013 & 165 & 0.3125 & 0.2500 & \textbf{0.041} \\
        6 & 114 & -0.017 & 107 & 0.1875 & 0.3750 & \textbf{0.042} \\
        2 & 195 & 0.102 & 113 & 0.3750 & 0.5625 & \textbf{0.043} \\ \addlinespace[0.5em]
        10 & 77 & 0.041 &  152 & 0.2500 & 0.4375 & \textbf{0.123} \\ \addlinespace[0.5em]
        75 & 172 & 0.123 & 661 & 0.3750 & 0.0000 & \textbf{0.177} \\
        70 & 205 & 0.048 & 860 & 0.3125 & 0.0000 & \textbf{0.182} \\
        52 & 327 & 0.040 & 1698 & 0.5625 & 0.0000 & \textbf{0.226} \\
        51 & 251 & 0.044 & 1440 & 0.5626 & 0.0000 & \textbf{0.233} \\
        69 & 180 & 0.077 & 1197 & 0.5000 & 0.0000 & \textbf{0.247} 

    \end{tabularx}
    \end{table}

Recall that $\alpha$ and $\beta$ give the proportion of homophilic and social capital agents respectively. We expect that for villages with high assortativity, $\alpha$ will be high, and for villages with a high proportion of triangles compared to network size, $\beta$ will be low. Reviewing Figures \ref{fig:std_tri} and \ref{fig:std_assort} reminds us that $\alpha$ and $\beta$ do not cause independent variation in triangle count and assortativity, so we observe some seemingly ``sub-optimal'' values of $\alpha$ and $\beta$. For example, note the five worst-fit villages. They all have been fit with a $\beta$ value of $0.000$, indicating a high triangle count. Indeed, the triangle counts for these villages are high. However, the $\alpha$ value does not vary as predictably with assortativity, leading to higher loss values. 

\subsection{Baseline Models' Experimental Results}
\vspace{-1em}
\begin{table}[h]
    \caption{Comparison to Baseline Model Fitting Results. We are unable to fit baselines to all villages due to long running times. We report mean (X) and standard deviation (Y) in the last row. (* statistics reported for 30 villages)}
    \label{tab:baseline_fit}
    \centering
    \begin{tabularx}{\linewidth}{@{}r | c c c@{}}
         \textsc{Village} & \textsc{Our Loss} & \textsc{Chris.~\cite{christakis2020empirical} Loss} & \textsc{Mele~\cite{mele2022structural} Loss}\\
         \midrule
        6 & \textbf{0.042} & 0.168 & 0.601 \\ 
        10 & \textbf{0.123} & 0.135 & 0.503 \\
        All $X (\pm Y)$ & \textbf{0.110 $\pm$ 0.042} & 0.139 $\pm$ 0.021* & --
    \end{tabularx}
    \end{table}

\vspace{-1em}

We compare our model's fit to the baselines for a small village our model fit well (Village 6), and a small village our model fit less well (Village 10). We were unable to fit all villages because the running time of the baseline models was prohibitively long. To (locally) complete a search over $200$ parameter sets to produce the networks in Figure \ref{fig:comparison_networks}, our simulation takes 45 minutes, Christakis's 40 minutes, and Mele's over 14 hours.
We show results for Village 6 in Figure \ref{fig:comparison_networks}, and full loss results in Table \ref{tab:baseline_fit}. Though the Christakis model performs fairly well (Figure \ref{fig:comparison_networks}c), the resulting network is too dense and underestimates the number of isolated nodes in Figure \ref{fig:comparison_networks}a. On the other hand, the Mele model performs poorly, clearly underestimating the density and triangle count of the true network, while properly capturing the isolated nodes. Note that the Christakis model is parameterized by seven variables, and the Mele model by nine. It is difficult to make claims regarding populations of people using these parameter sets, even if they are well fit.   

\begin{figure}
\captionsetup{justification=centering} 
    \begin{subfigure}{.46\columnwidth}
    \centering
        \includegraphics[width = \textwidth, trim={0cm 1cm 0cm 1cm}, clip]{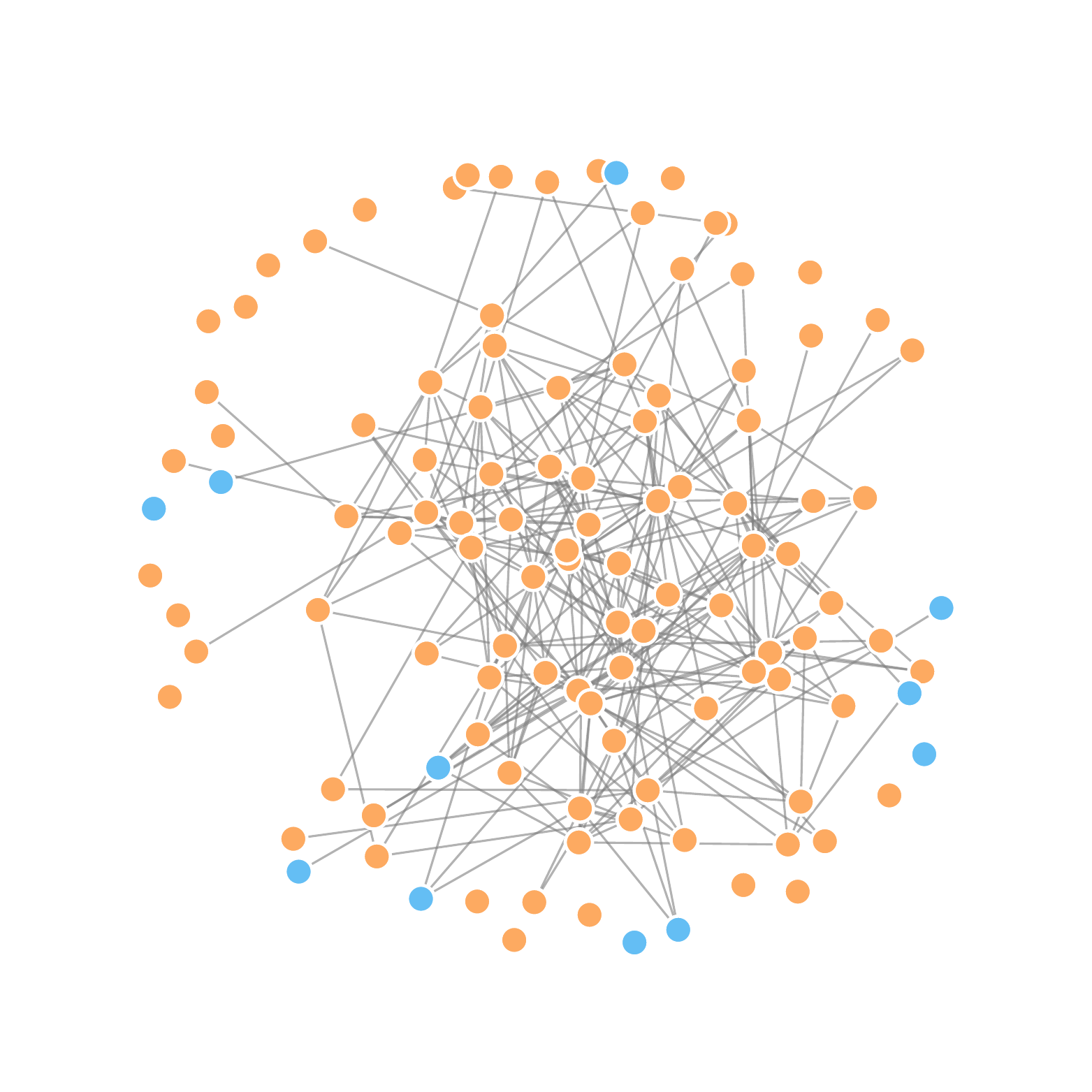}
        \caption{Actual Network}
        \label{fig:my_labela}
    \end{subfigure}
\hfill
    \begin{subfigure}{.46\columnwidth}
    \centering
        \includegraphics[width = \textwidth, trim={0cm 1cm 0cm 1cm}, clip]{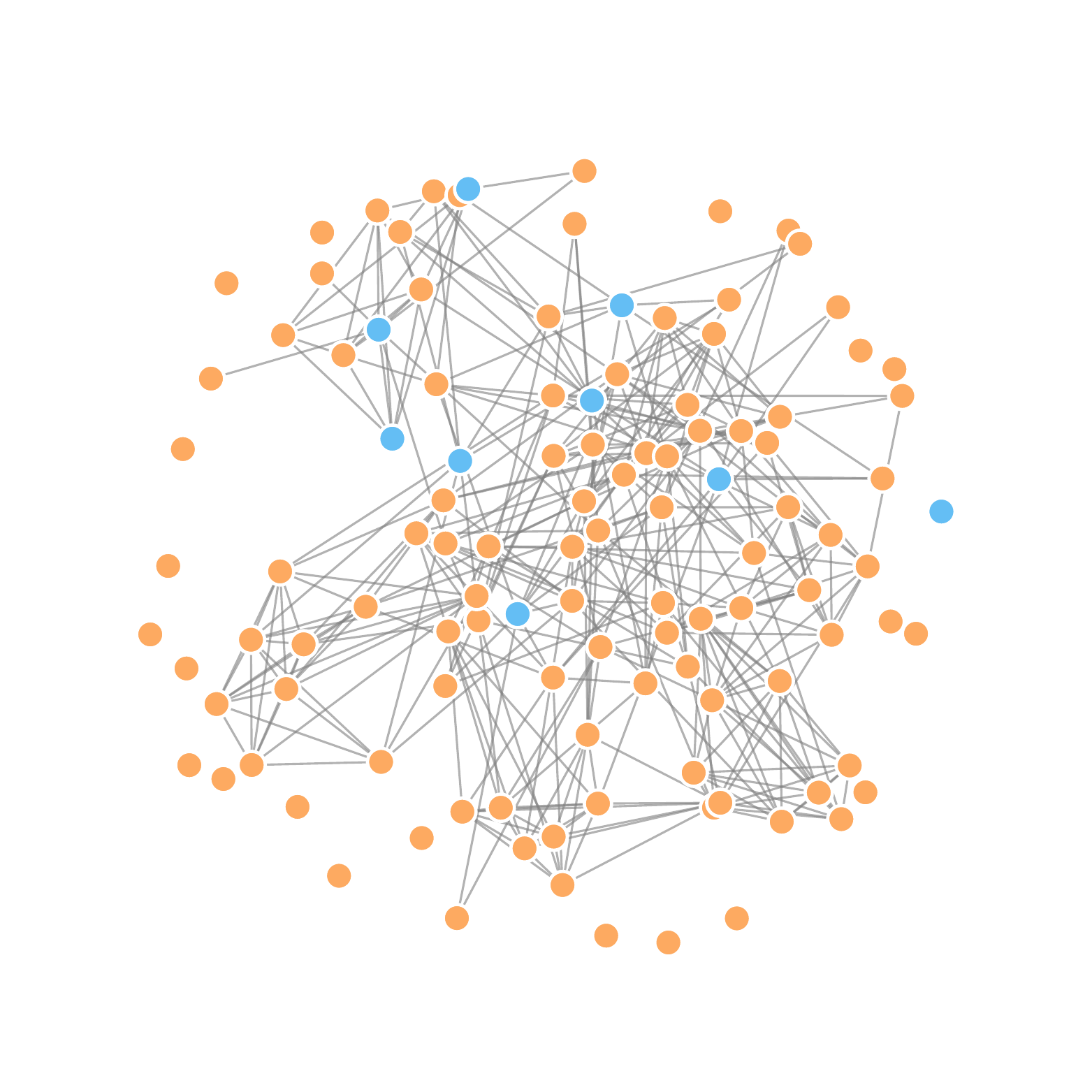}
        \caption{Our Simulation}
        \label{fig:my_labelb}
    \end{subfigure}

    \begin{subfigure}{.46\columnwidth}
    \centering
        \includegraphics[width = \textwidth, trim={0cm 1cm 0cm 1cm}, clip]{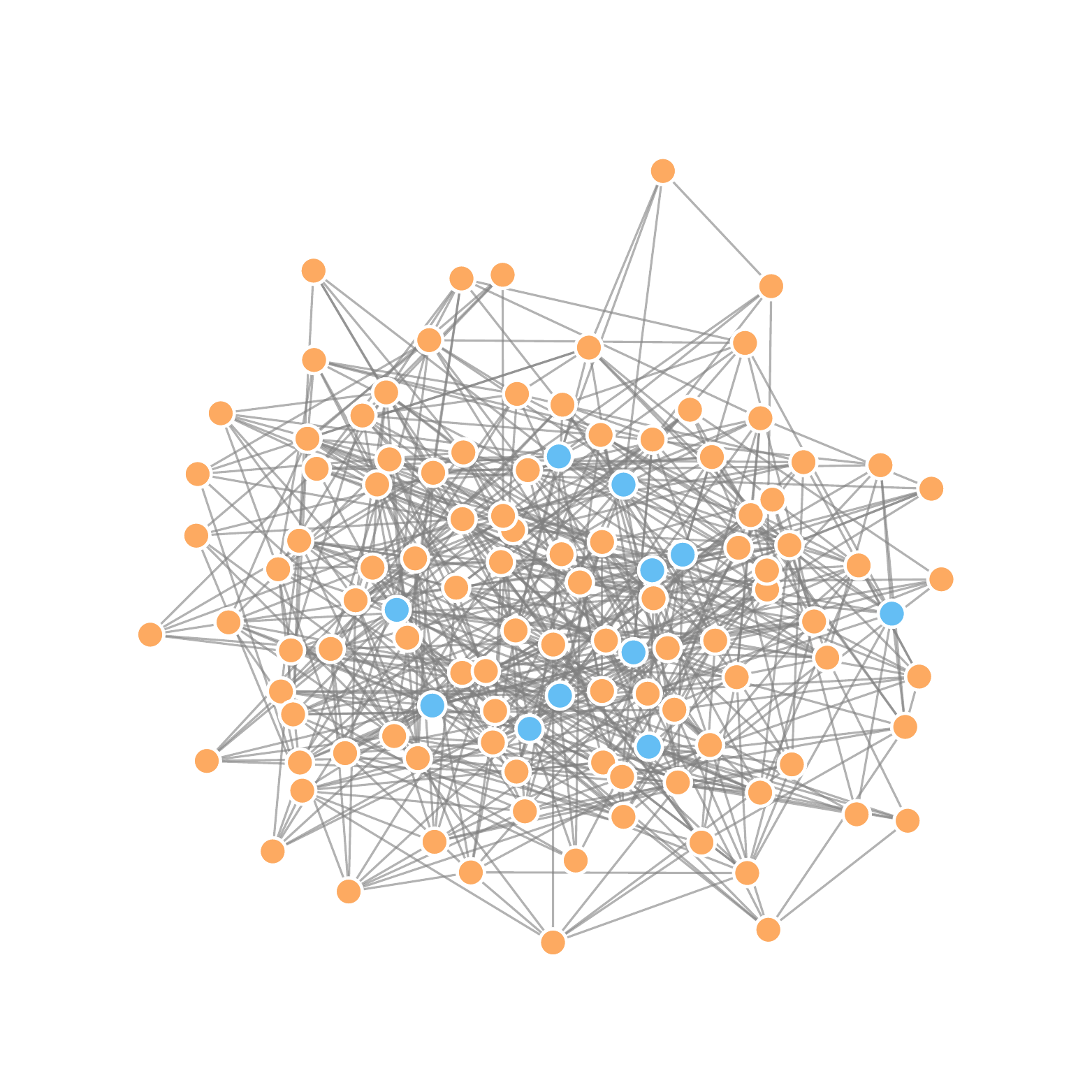}
        \caption{Christakis Simulation}
        \label{fig:my_labelc}
    \end{subfigure}
\hfill
    \begin{subfigure}{.46\columnwidth}
    \centering
        \includegraphics[width = \textwidth, trim={0cm 1cm 0cm 1cm}, clip]{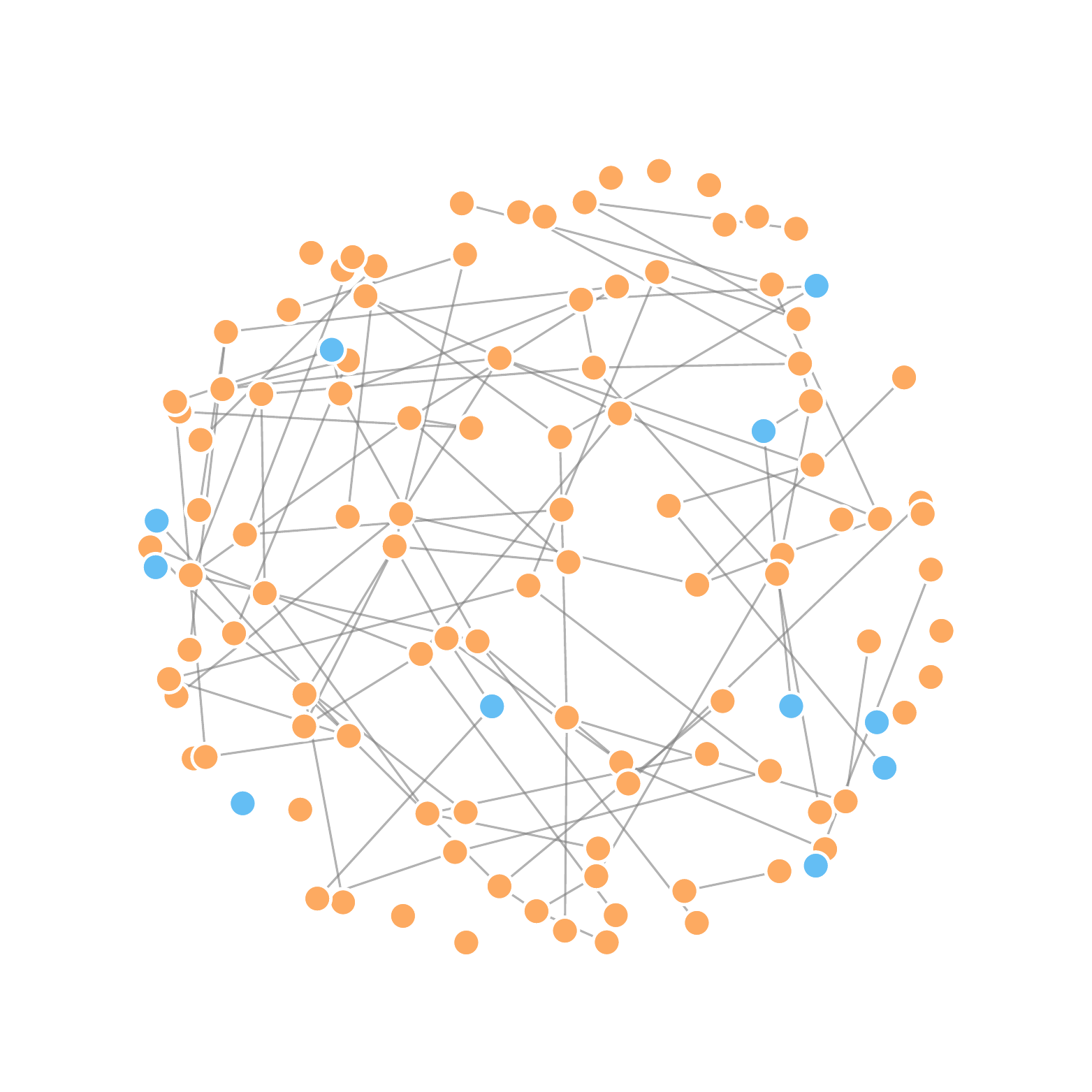}
        \caption{Mele Simulation}
        \label{fig:my_labeld}
    \end{subfigure}
    \caption{Comparison of actual network (Village 6) to our network and two baseline models. Color represents the attribute. Our model performs best, followed by Christakis~\cite{christakis2020empirical} and then Mele~\cite{mele2022structural}. }
    \label{fig:comparison_networks}
    \vspace{-1em}
\end{figure}

\section{Limitations and Future Work}
\begin{description}[style=unboxed,leftmargin=.35cm]
    \item[Scalability:]
    Agents in our model consider candidates two hops away; this can be as many as $\kappa^2$ candidates.
    Scalability of the model could be improved by restricting this candidate set.
    An implementation optimizing for sparse matrix operations could also improve computation time, especially on very large datasets.
    
    \item[Attribute dimension:]
    We only consider agents with a single scalar categorical attribute.
    This could be resolved by increasing the dimension of attribute preferences as noted in Section \ref{sec:ablations}.

    \item[Strategy space:]
    Our definition of the social capital structural utility function (requiring completely disconnected neighbors) may be overly strict, and make it difficult for social capital agents to achieve high utility.
    Instead, a modification where utility is derived per connected component in the neighborhood induced subgraph may more accurately capture real payoffs.
    
    \item[Convergence rate:]
    We stop our network formation simulations when the stable triad count stabilizes (see Section \ref{stop_criteria}).
    While we prove that this occurs in a finite number of iterations, we only conjecture a convergence rate.
    Determining when the probability of substantial structural change is low would be a stronger result.

\end{description}

\section{Conclusion}

We present an ecologically valid model of community formation with boundedly rational, attributed agents. The agents can maintain a limited number of friendships, and have local network knowledge. They employ core strategies related to local network structure and preference for homophily to form friendships. We consider a population-level mixing of strategies, including attribute and structural preferences. Every parameter value in our model is meaningful --- a particular choice of $\alpha$ (homophily), $\beta$ (social capital), $\omega$ (type distribution), and $\kappa$ (resource constraint) is a statement on the preferences of the population.
We prove that the number of stable triads in our network converges; thus, our network must become structurally stable. We show via simulation that this convergence occurs quickly in practice. Our model is well-fit to the observed networks, and ablations demonstrate ecological validity. Our model differs from the prior work in that agent behaviors are ecologically valid and interpretable.
The significance of our ecologically valid model lies in creating counterfactual scenarios and in examining policy interventions such as those designed to embed people in different parts of a social network \cite{chetty2016effects}.

\printbibliography

\end{document}